\pdfoutput=1
\newif\ifFull
\Fullfalse
\ifFull
\documentclass[11pt]{llncs}
\else
\documentclass{llncs}
\fi

\usepackage[usenames,dvipsnames]{xcolor}
\usepackage{amssymb}
\usepackage{tikz}
\usetikzlibrary{%
  arrows,
  calc,
  chains,
  intersections,
  scopes,
  snakes,
  decorations.markings
}



\begin{document}

\title{Achieving Good Angular Resolution\\
       in 3D Arc Diagrams}

\author{Michael T. Goodrich
\and Pawe\l{} Pszona}

\institute{
Dept.~of Computer Science \\
University of California, Irvine
}
\date{}

\maketitle

\begin{abstract}
We study a three-dimensional analogue
to the well-known graph visualization approach known as 
\emph{arc diagrams}.
We provide several algorithms that achieve good angular resolution
for 3D arc diagrams, even for cases when the arcs must project
to a given 2D straight-line drawing of the input graph.
Our methods make use of various graph coloring algorithms,
including an algorithm for a new coloring problem, which we call
\emph{localized edge coloring}.
\end{abstract}

\ifFull\else
\pagestyle{plain}
\fi

\section{Introduction}

An \emph{arc diagram} is a two-dimensional graph drawing where the
vertices of a graph, $G$, are placed on a one-dimensional curve 
(typically a straight line) and the edges of $G$ 
are drawn as circular arcs that may go outside that curve
(e.g., see~\cite{Angel13,Brandes99B,CS96,DV02,Nich68,Saaty64,Wattenberg:2002}).
By way of analogy, we
define a \emph{three-dimensional arc diagram} to be a
drawing where the
vertices of a graph, $G$, are placed on a two-dimensional surface 
(such as a sphere or plane)
and the edges of $G$ are drawn as circular arcs 
that may go outside that surface.
(See Fig.~\ref{fig:3darcs}.)
This 3D drawing paradigm is used, for example, to draw geographic
networks or flight networks (e.g., see~\cite{BGT00}).

\begin{figure}[hb!]
  \vspace*{-6pt}
  \begin{center}
    \begin{tikzpicture}
      [
        scale=0.96,
        n/.style={circle,fill=black,draw,inner sep=1pt},
        scale=2
      ]

      \node at (-0.3,0) {\pgftext{\includegraphics[scale=0.8]{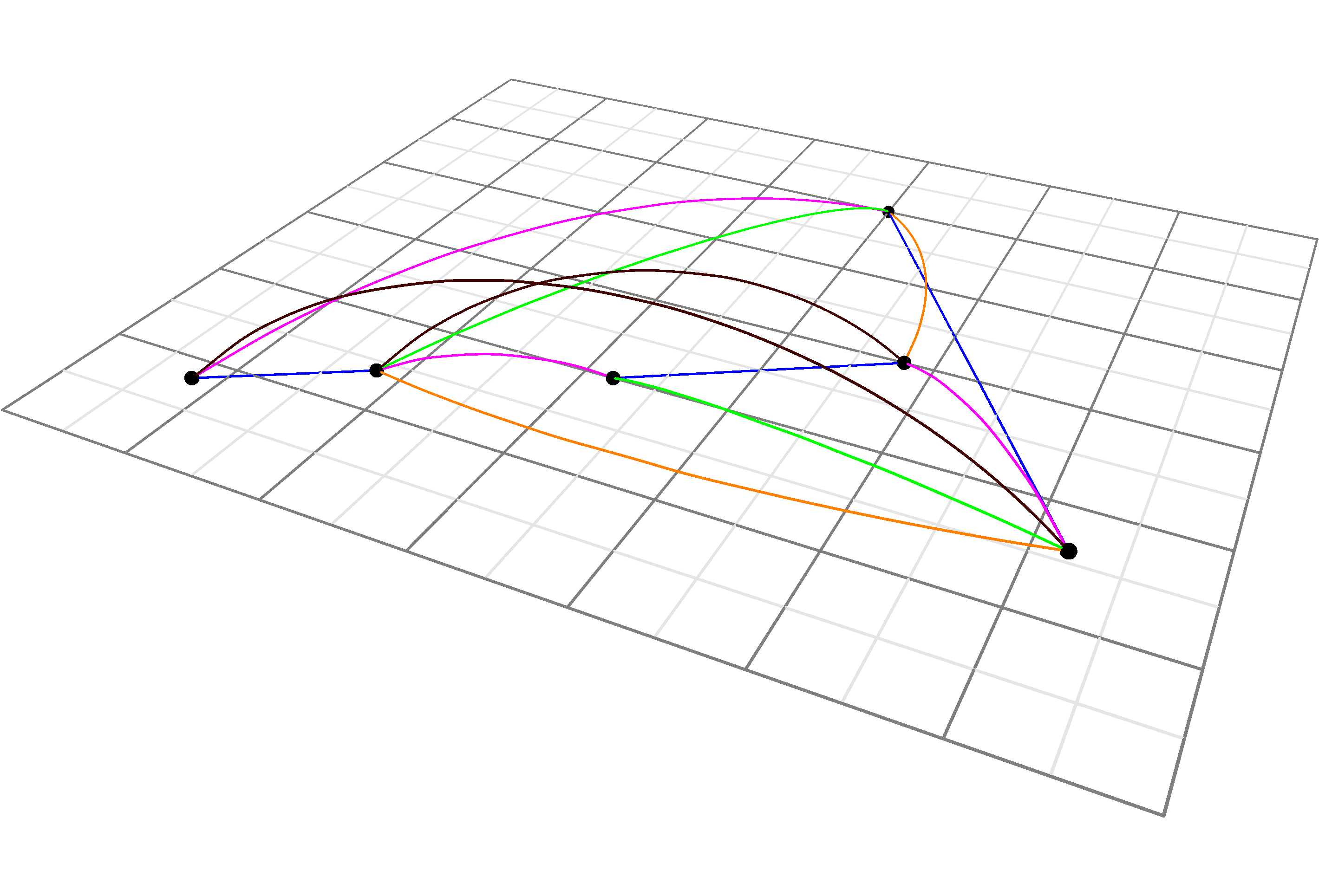}}};

      \begin{scope}[xshift=-2.5cm,yshift=1.3cm,rotate=240]
        \node[n] (a) at (0,0) {};
        \node[n] (b) at (1,0.4) {};
        \node[n] (c) at (1.5,-0.4) {};
        \node[n] (d) at (1.7,1.1) {};
        \node[n] (e) at (1.8,-1.1) {};
        \node[n] (f) at (2.1,-1.6) {};

        \draw[thick,orange] (a) -- (b);
        \draw[thick,blue] (a) -- (d);
        \draw[thick,green] (a) -- (e);
        \draw[thick,magenta] (a) -- (f);
        \draw[thick,blue] (b) -- (c);
        \draw[thick,magenta] (b) -- (d);
        \draw[thick,Sepia] (b) -- (e);
        \draw[thick,green] (c) -- (d);
        \draw[thick,magenta] (c) -- (e);
        \draw[thick,orange] (d) -- (e);
        \draw[thick,Sepia] (d) -- (f);
        \draw[thick,blue] (e) -- (f);
      \end{scope}

      \node at (19,0) {};

      \node at (-3.3,-1.2) {(a)};
      \node at (-0.4,-1.2) {(b)};
    \end{tikzpicture}
  \end{center}
  \vspace*{-8pt}
  \caption{
  A graph rendered (a) as a straight-line drawing and (b) as a 3D arc diagram.
}
  \vspace*{-4pt}
  \label{fig:3darcs}
\end{figure}

In this paper, we are interested in the angular resolution of 
3D arc diagrams, that is, the smallest angle determined by the
tangents at a vertex, $v$, to two arcs incident to $v$ in such a drawing.
Specifically, we provide algorithms for achieving good angular
resolution in 3D arc diagrams where the
(\emph{base}) surface that contains the vertices for the graph, $G$, 
is a sphere or a plane.
Moreover, for the 3D arc diagrams that we consider in this paper, we assume that all 
the edges of $G$ are drawn to protrude out of only one side
of the base surface.

\subsection{Previous Related Results}
The term ``\emph{arc diagram}'' was defined in 2002 by 
Wattenberg~\cite{Wattenberg:2002}, but the drawing paradigm actually can be traced
back to the 1960's, including work
by Saaty~\cite{Saaty64} and Nicholson~\cite{Nich68}. 
Also, earlier work by
Brandes~\cite{Brandes99B} explores symmetry in arc diagrams,
earlier work by 
Cimikowski and Shope~\cite{CS96} explores heuristics for minimizing
the number of arc crossings,
and earlier work by Djidjev and Vrt'o~\cite{DV02} explores lower bounds
for the crossing numbers of such drawings.
Most recently, Angelini {\it et al.}~\cite{Angel13} show that there
is a universal set of $O(n)$ points on a parabola that allows any 
planar graph to be drawn as a planar arc diagram.

In terms of previous work on arc diagrams for optimizing
the angular resolution of such drawings, 
Duncan {\it et al.}~\cite{degkn-ldg-11} give a complete characterization
of which graphs can be drawn as arc diagrams with 
vertices placed on a circle and 
perfect angular resolution, using a drawing style inspired by 
the artist, Mark Lombardi, where edges are drawn using circular arcs so as
to achieve good angular resolution.
With respect to a lower bound for this drawing style,
Cheng {\it et al.}~\cite{cdgk-dcg-01}
give a planar graph with bounded degree, $d$, that requires exponential area if
it is drawn as a plane graph with circular-arc edges and angular resolution 
$\Omega(1/d)$.
Even so, it is possible to draw any planar graph
as a plane graph with poly-line or poly-circular edges
to achieve polynomial area and $\Omega(1/d)$ angular resolution,
based on results by
a number of authors
(e.g., see
Brandes {\it et al.}~\cite{bgt-iarvg-00},
Cheng {\it et al.}~\cite{cdgk-dcg-01},
Duncan {\it et al.}~\cite{degkl-ppal-12,degkn-ldg-11},
Garg and Tamassia~\cite{gt-pdar-94},
Goodrich and Wagner~\cite{Goodrich2000399},
and
Gutwenger and Mutzel~\cite{gm-ppdg-98}).

In addition,
several researchers have investigated how to achieve good angular resolution
for various straight-line drawings of graphs.
Duncan {\it et al.}~\cite{degkn-dtp-11} show that one can draw an ordered
tree of degree $d$ as a straight-line planar drawing with angular resolution
$\Omega(1/d)$.
Formann {\it et al.}~\cite{DBLP:conf/focs/FormannHHKLSWW90} show that any graph
of degree $d$
has a straight-line drawing with polynomial area and 
angular resolution $\Omega(1/d^2)$, and this can be improved to be $\Omega(1/d)$
for planar graphs, albeit with a drawing that may not be planar.

We are not familiar with any previous work on achieving good angular resolution
for 3D arc diagrams, but there is previous related work on 
other types of 3D drawings~\cite{celr-tdgd-97}.
For instance, Brandes {\it et al.}~\cite{BGT00}
show that one can achieve $\Omega(1/d)$ angular resolution for 3D geometric 
network drawings, but their edges are curvilinear splines, rather than simple
circular arcs.
Garg {\it et al.}~\cite{gtv-dwc-96}
study 3D straight-line drawings so as to satisfy various resolution criteria,
but they do not constrain vertices to belong to a 2D surface.
In addition, Eppstein {\it et al.}~\cite{elmn-o3ar-11} provide an algorithm 
for achieving optimal angular resolution in 3D drawings 
of low-degree graphs using poly-line edges.

\subsection{Our Results}
In this paper, we give several algorithms for achieving good
angular resolution for 3D arc diagrams.
In particular, we show the following for a graph, $G$,
with maximum degree, $d$:
\begin{itemize}
\item
We can draw $G$ as a 3D arc diagram
with an angular resolution of $\Omega(1/d)$
($\Omega(1/d^{1/2})$ if $G$ is planar)
using straight-line segments and vertices placed on a sphere.
\item
We can draw $G$ as a 3D arc diagram
with an angular resolution of $\Omega(1/d)$
using circular arcs that project perpendicularly to a 
given straight-line drawing for $G$ in a base plane, no matter
how poor the angular resolution of that projected drawing.
\item
If a straight-line 2D drawing of $G$
already has an angular resolution of $\Omega(1/d)$ in a base plane, ${\cal P}$, then we
can draw $G$ as a 3D arc diagram
with an angular resolution of $\Omega(1/d^{1/2})$
using circular arcs that project perpendicularly to the given drawing of $G$
in ${\cal P}$.
\item
Given any 2D straight-line drawing of $G$ in a base plane, ${\cal P}$, we can draw
$G$ as a 3D arc diagram
with an angular resolution of $\Omega(1/d^{1/2})$
using circular arcs that project to the edges of
the drawing of $G$ in ${\cal P}$, with each arc possibly using a different 
projection direction.
\end{itemize}
Our algorithms make use of various graph coloring methods, including
an algorithm for a new coloring problem, which we 
call \emph{localized edge coloring}.

Note that $O(1/d^{1/2})$ is an upper bound on the resolution
of a 3D arc drawing of $G$, as maximizing the smallest angle between
two edges around a vertex, $v$, is equivalent to maximizing smallest
distance between intersections of a unit sphere centered at $v$, and
lines tangent to edges incident to $v$, which is known as the
\emph{Tammes problem}~\cite{tammes}.
The $O(1/d^{1/2})$ upper bound is due to
Fejes T\'oth~\cite{fejes_toth}.

\section{Preliminaries}

In this section, we provide formal definitions of two notions 
of 3D arc diagrams.

We extend the notion of arc diagrams and define
\emph{3D arc diagram drawings} of a graph, $G$, 
to be 3D drawings that meet the following criteria:
\begin{itemize}
  \item[(1)] nodes (vertices) are placed on a single 
(\emph{base}) sphere or plane
  \item[(2)] each edge, $e$, is drawn as
             a \emph{circular arc}, i.e., a contiguous subset of a circle
  \item[(3)] all edges lie entirely on one side of the base sphere or plane.
\end{itemize}
In addition, if the base surface is a plane, ${\cal P}_1$,
then each circular edge, $e$, which belongs to a plane, ${\cal P}_2$,
forms the same angle, $\alpha_e\le \pi/2$, 
in ${\cal P}_2$, at its two endpoints.
Moreover, in this case, each edge
projects (perpendicularly) to a straight line segment in ${\cal P}_1$.
An example of such an arc is shown in Fig.~\ref{circ_arcs_fig}a.

For 3D arc diagrams restricted to use a base plane, ${\cal P}_1$ 
(rather than a sphere),
by modifying the second condition, we obtain a definition of
\emph{slanted 3D arc diagram drawings}.
\begin{itemize}
  \item[($2'$)] each edge $e$ is a \emph{circular arc} that lies on a 
    plane, ${\cal P}_2$, that
    contains both endpoints of $e$ and forms an angle, 
    $\beta_e < {\pi}/{2}$,
    with the base plane, ${\cal P}_1$;
    the edge, $e$,
    forms the same angle, $\alpha_e \leq {\pi}/{2}$, in ${\cal P}_2$,
    at its two endpoints.
\end{itemize}
Note that in this case
each circular edge, $e$,
joining vertices $a$ and $b$,
in a slanted 3D arc diagram, projects to a straight line segment, $L=ab$, in 
the base plane, ${\cal P}_1$, using a direction 
perpendicular to $L$ in ${\cal P}_2$.
Still, a perpendicular projection of the drawing onto the base plane, 
${\cal P}_1$,
is not necessarily
a straight-line drawing of $G$ and may not even be planar.
For an example, see Fig.~\ref{circ_arcs_fig}b.

\begin{figure}[t]
  \begin{center}
    \begin{tikzpicture}
      [
        n/.style={circle,fill=black,draw,inner sep=1pt},
        node distance=1pt
      ]

      \coordinate (p1_upper_left) at (0,0) {};
      \coordinate (p1_upper_right) at (5,0) {};
      \coordinate (p1_lower_left) at (-2,-2) {};
      \coordinate (p1_lower_right) at (3,-2) {};


      \coordinate (p2_upper_left) at (1,0.5) {};
      \coordinate (p2_upper_right) at (3,2.5) {};
      \coordinate (p2_lower_left) at (0,-3.5) {};
      \coordinate (p2_lower_right) at (2,-1.5) {};


      \path[name path=e] (p1_upper_left) -- (p1_upper_right);
      \path[name path=f] (p2_upper_right) -- (p2_lower_right);
      \path[name intersections={of=e and f, by=p1}];
      \path[name path=f] (p2_upper_left) -- (p2_lower_left);
      \path[name intersections={of=e and f, by=i1}];

      \path[name path=e] (p1_lower_left) -- (p1_lower_right);
      \path[name path=f] (p2_upper_left) -- (p2_lower_left);
      \path[name intersections={of=e and f, by=p2}];
      \path[name path=f] (p2_lower_left) -- (p2_lower_right);
      \path[name intersections={of=e and f, by=i2}];

      \draw[dashed] (p1) -- (p2);

      \draw (p1_upper_left) -- (i1);
      \draw[dashed] (i1) -- (p1);
      \draw (p1) -- (p1_upper_right) -- (p1_lower_right) --
        (p1_lower_left) -- (p1_upper_left);

      \draw (p2_upper_left) -- (p2_upper_right) -- (p1);
      \draw[dashed] (p1) -- (p2_lower_right) -- (i2);
      \draw (i2) -- (p2_lower_left) -- (p2_upper_left);

      \path[name path=e] (p1) .. controls (2.4,2) and (2.4,2) .. (p2);

      \begin{scope}
        \path[name path=f] (p1) ++(0.7,0) arc[radius=0.7,start angle=0,end angle=90];
        \path[name path=g] (p2_upper_right) -- (p2_lower_right);
        \path[name intersections={of=g and f, by=i3}];
        \clip (p1) -- (i3) -- ++(1,0) -- ++(0,-1) -- (p1);
        \draw (p1) ++(0.7,0) arc[radius=0.7,start angle=0,end angle=90] node[right,pos=0.5] {$aaa$};
        \node at ($(p1) +(0.35,0.2)$) {$\beta_e$};
      \end{scope}

      \begin{scope}
        \path[name path=f] (p1) -- (p2);
        \path[name path=g] (p2) ++ (1,0.8) arc[radius=2,start angle=0,end angle=90];
        \path[name intersections={of=e and g, by=i3}];
        \path[name intersections={of=f and g, by=i4}];
        \clip (p2) -- (i3) -- (p1) -- (i4) -- (p2);
        \draw (p2) ++(1,0.8) arc[radius=2,start angle=0,end angle=90];
        \node at ($(p2) +(0.8,1.05)$) {$\alpha_e$};
      \end{scope}

      \draw[thick,red] (p1) .. controls (2.4,2) and (2.4,2) .. (p2);
      \node[n] at (p1) {};
      \node[n] at (p2) {};

      \node at ($(p2_upper_right) - (0.5,1)$) {$\mathcal{P}_2$};
      \node at ($(p1_upper_right) - (1,0.5)$) {$\mathcal{P}_1$};
      \node[below right=of p2] {$a$};
      \node[below right=of p1] {$b$};

      \node[red] at (2,0.5) {\large $e$};

      \begin{scope}[xshift=-6cm]
        \coordinate (p1_upper_left) at (0,0) {};
        \coordinate (p1_upper_right) at (5,0) {};
        \coordinate (p1_lower_left) at (-2,-2) {};
        \coordinate (p1_lower_right) at (3,-2) {};

        \coordinate (p2_upper_left) at (0.5,0.5) {};
        \coordinate (p2_upper_right) at (2.5,2.5) {};
        \coordinate (p2_lower_left) at (0.5,-3.5) {};
        \coordinate (p2_lower_right) at (2.5,-1.5) {};

        \path[name path=e] (p1_upper_left) -- (p1_upper_right);
        \path[name path=f] (p2_upper_right) -- (p2_lower_right);
        \path[name intersections={of=e and f, by=p1}];
        \path[name path=f] (p2_upper_left) -- (p2_lower_left);
        \path[name intersections={of=e and f, by=i1}];

        \path[name path=e] (p1_lower_left) -- (p1_lower_right);
        \path[name path=f] (p2_upper_left) -- (p2_lower_left);
        \path[name intersections={of=e and f, by=p2}];
        \path[name path=f] (p2_lower_left) -- (p2_lower_right);
        \path[name intersections={of=e and f, by=i2}];

        \draw[dashed] (p1) -- (p2);

        \draw (p1_upper_left) -- (i1);
        \draw[dashed] (i1) -- (p1);
        \draw (p1) -- (p1_upper_right) -- (p1_lower_right) --
              (p1_lower_left) -- (p1_upper_left);

        \draw (p2_upper_left) -- (p2_upper_right) -- (p1);
        \draw[dashed] (p1) -- (p2_lower_right) -- (i2);
        \draw (i2) -- (p2_lower_left) -- (p2_upper_left);

        \path[name path=e] (p1) .. controls (1.7,1.6) .. (p2);

        \begin{scope}
          \draw (p1) ++(0.7,0) arc[radius=0.7,start angle=0,end angle=90];
          \node at ($(p1) +(0.3,0.25)$) {\scriptsize $90^\circ$};
        \end{scope}

        \begin{scope}
          \path[name path=f] (p1) -- (p2);
          \path[name path=g] (p2) ++ (1,0.8) arc[radius=2,start angle=0,end angle=90];
          \path[name intersections={of=e and g, by=i3}];
          \path[name intersections={of=f and g, by=i4}];
          \clip (p2) -- (i3) -- (p1) -- (i4) -- (p2);
          \draw (p2) ++(1,0.8) arc[radius=2,start angle=0,end angle=90];
          \node at ($(p2) +(0.65,1.05)$) {\large $\alpha_e$};
        \end{scope}

        \draw[thick,red] (p1) .. controls (1.7,1.6) .. (p2);
        \node[n] at (p1) {};
        \node[n] at (p2) {};

        \node at ($(p2_upper_right) - (0.5,1)$) {$\mathcal{P}_2$};
        \node at ($(p1_upper_right) - (1,0.5)$) {$\mathcal{P}_1$};
        \node[below right=of p2] {$a$};
        \node[below right=of p1] {$b$};

        \node[red] at (1.7,0.5) {\large $e$};
      \end{scope}

      \node at (-5,-4) {(a)};
      \node at (1,-4) {(b)};
    \end{tikzpicture}
  \end{center}
  \vspace*{-22pt}
  \caption{Edge $e=(a,b)$ drawn as (a) circular arc with angle $\alpha_e$;
           (b) slanted circular arc with angles ($\alpha_e$, $\beta_e$).}
  \label{circ_arcs_fig}
\end{figure}
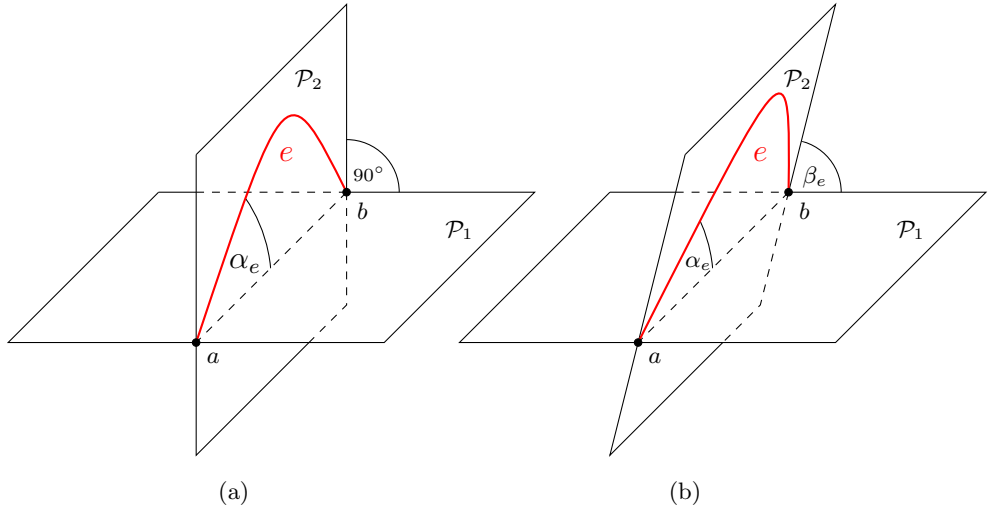

\section{Localized Edge Coloring}

Recall that a \emph{vertex coloring} of a graph is an
assignment of colors to vertices so that every vertex is given a color
different
from those of its adjacent vertices, and an \emph{edge coloring} is an
assignment of colors to a graph's edges so that every edge
is given a color different from its incident edges.
A well-known greedy algorithm can color any graph with maximum degree, $d$,
using $d+1$ colors, and 
Vizing's theorem~\cite{Vizing} states that
edges of an undirected graph $G$ can similarly be colored with $d+1$
colors, as well.

Assuming we are given
an undirected graph $G$ together with its combinatorial embedding on a plane
(i.e., the order of edges around each vertex, which is also known
as a \emph{rotation system}),
we introduce a localized notion of an edge coloring, which will be useful
for some of our results regarding 3D arc diagrams.
Given an even integer parameter, $L$,
we define 
an \emph{$L$-localized edge coloring}
to be an edge coloring that satisfies the following condition:
\begin{center}
  \fbox{
    \parbox{4.6in}{
      Suppose an edge $e=(u,v)$ has color $c$,
      and let $(l_1,\,l_2,\,\ldots,\, l_i = e,\, \ldots\, l_k)$
      be a clockwise ordering of edges incident to $u$.
      Then none of the edges
   $l_{i-L/2},\, l_{i-L/2+1},\, \ldots,\,l_{i-1},\,l_{i+1},\,\ldots,\, l_{i+L/2}$,
that is, the
    $L/2$ edges before $e$ and $L/2$ edges after $e$ in the ordering, 
    has color $c$.
    (Note that, by symmetry, the same goes for edges around $v$.)
    }
  }
\end{center}

Thus, a valid $d$-localized edge coloring is also a valid
classical edge coloring.
We call the set,
$\{l_{i-L/2},\, l_{i-L/2+1},\, \ldots,\,l_{i-1},\,l_{i+1},\,\ldots,\, l_{i+L/2}\}$,
the \emph{$L$-neighborhood of $e$ around $u$}.

As with the greedy approach to vertex coloring,
an $L$-localized edge coloring can be found by a simple greedy algorithm that
incrementally assigns colors to edges, one at a time.
Each edge $e = (u,v)$ is colored with color $c$ that does not appear in
both $L$-neighborhoods of $e$ (around $u$ and around $v$). 
Using reasonable data structures, this greedy
algorithm can be implemented to run in $O(mL)$ time, for a graph with 
$m$ edges, and combining it with
Vizing's theorem~\cite{Vizing}, allows us to find an edge coloring
that uses at most $\min\{d,2L\}+1$ colors.

\section{Improving Resolution via Edge Coloring}


As mentioned above,
we define the angle between two incident arcs in the 3D arc diagram
to be the angle between lines tangent to the arcs at their common endpoint.
In order to reason about angles in 3D, the following lemma will prove
useful.

\begin{lemma}
\label{angle_lemma_1}
Consider two segments $l_1$, $l_2$ that share a common endpoint
that lies on a plane $\mathcal{P}$ (see Fig.~\ref{angle_lemma_1_fig}).
If both $l_1$ and $l_2$ form angle $\beta\leq\pi/4$ with
their projections onto $\mathcal{P}$,
and projections of $l_1$ and $l_2$ onto $\mathcal{P}$ form angle $\alpha$,
then $\delta$, the angle between $l_1$ and $l_2$, is at least $\alpha/2$.
\end{lemma}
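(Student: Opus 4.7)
The plan is to attack this by explicit coordinates. I would place the common endpoint at the origin with $\mathcal{P}$ as the $xy$-plane. Since both segments lie on the same side of $\mathcal{P}$ and form the same tilt angle $\beta$ with their projections, I can arrange that the unit vector along $l_1$ is $(\cos\beta,\,0,\,\sin\beta)$ and the unit vector along $l_2$ is $(\cos\beta\cos\alpha,\,\cos\beta\sin\alpha,\,\sin\beta)$, where $\alpha$ is the angle between their projections in $\mathcal{P}$.

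Taking the dot product gives
\[
\cos\delta \;=\; \cos^2\beta\cos\alpha + \sin^2\beta \;=\; 1 - \cos^2\beta\,(1-\cos\alpha).
\]
(The same identity is the spherical law of cosines applied to the two corresponding points on a unit sphere, both at latitude $\beta$ with longitude difference $\alpha$; I may mention this as sanity check.) The goal $\delta \ge \alpha/2$ is equivalent to $\cos\delta \le \cos(\alpha/2)$, i.e.
\[
1 - \cos(\alpha/2) \;\le\; \cos^2\beta\,(1-\cos\alpha).
\]
Using the half-angle identities $1-\cos(\alpha/2) = 2\sin^2(\alpha/4)$ and $1-\cos\alpha = 2\sin^2(\alpha/2) = 8\sin^2(\alpha/4)\cos^2(\alpha/4)$, this inequality simplifies (after dividing by $\sin^2(\alpha/4)$, with the trivial case $\alpha=0$ handled separately) to
\[
1 \;\le\; 4\cos^2\beta\,\cos^2(\alpha/4),
\qquad\text{i.e.,}\qquad \cos\beta\,\cos(\alpha/4)\;\ge\;\tfrac12.
\]

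The proof then closes by observing that $\beta \le \pi/4$ gives $\cos\beta \ge \sqrt{2}/2$, and since $\alpha$ is an angle between two rays we have $\alpha \le \pi$, so $\alpha/4 \le \pi/4$ and $\cos(\alpha/4)\ge \sqrt{2}/2$; multiplying yields $\cos\beta\,\cos(\alpha/4)\ge \tfrac12$, as required. The only real subtlety is algebraic — trusting that the half-angle rewrites produce the clean bound $\cos\beta\cos(\alpha/4)\ge 1/2$, which happens to be \emph{exactly} tight at $\beta=\pi/4$, $\alpha=\pi$; I would double-check this boundary case directly (there $l_1=(\tfrac{\sqrt2}{2},0,\tfrac{\sqrt2}{2})$ and $l_2=(-\tfrac{\sqrt2}{2},0,\tfrac{\sqrt2}{2})$ give $\delta=\pi/2=\alpha/2$) to confirm the hypothesis $\beta\le\pi/4$ cannot be weakened. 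No serious obstacle is expected — the argument is essentially a half-angle computation once coordinates are fixed.
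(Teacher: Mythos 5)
Your proof is correct and follows essentially the same route as the paper: both derive the identity $\cos\delta = 1-\cos^2\beta\,(1-\cos\alpha)$ (you via dot products, the paper via two applications of the law of cosines) and then conclude $\delta\ge\alpha/2$ from $\beta\le\pi/4$. In fact your half-angle computation, reducing the claim to $\cos\beta\cos(\alpha/4)\ge\tfrac12$ using $\alpha\le\pi$, supplies exactly the detail the paper's proof asserts without justification, and your tightness check at $\beta=\pi/4$, $\alpha=\pi$ is a nice confirmation.
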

\begin{proof}
Assume w.l.o.g. that $|l_1| = |l_2| = 1$. The distance $d$ between endpoints
of $l_1$ and $l_2$ is the same as the distance between endpoints of projections
of $l_1$ and $l_2$ onto $\mathcal{P}$ (because both $l_1$ and $l_2$ form angle
$\beta$ with $\mathcal{P}$). Lengths of the projections are $\cos\beta$,
and by the law of cosines,
\begin{displaymath}
d^2 = \cos^2\beta + \cos^2\beta - 2\cos\beta\cos\beta\cos\alpha
    = 2\cos^2\beta(1 - \cos\alpha).
\end{displaymath}
On the other hand, again by the law of cosines,
\begin{displaymath}
d^2 = |l_1|^2 + |l_2|^2 -2|l_1||l_2|\cos\delta = 2(1-\cos\delta).
\end{displaymath}
Comparing the two yields
\begin{displaymath}
2\cos^2\beta(1 - \cos\alpha) = 2(1-\cos\delta),
\end{displaymath}
which leads to
\begin{displaymath}
\cos\delta = 1 - \cos^2\beta(1-\cos\alpha).
\end{displaymath}
For $\beta\leq\pi/4$,
\begin{displaymath}
\cos\delta \leq\cos\frac{\alpha}{2},
\end{displaymath}
which means that
\begin{displaymath}
\delta\geq\frac{\alpha}{2}.
\end{displaymath}
\qed
\end{proof}

\begin{figure}[t]
  \begin{center}
    \begin{tikzpicture}
      [
        n/.style={circle,fill=black,draw,inner sep=1pt},
        node distance=1pt
      ]

      \coordinate (p1_upper_left) at (-1,1) {};
      \coordinate (p1_upper_right) at (6,1) {};
      \coordinate (p1_lower_left) at (-3,-2) {};
      \coordinate (p1_lower_right) at (4,-2) {};

      \draw (p1_upper_left) -- (p1_upper_right) -- (p1_lower_right) --
            (p1_lower_left) -- (p1_upper_left);

      \coordinate (v) at (3,-1.5);
      \coordinate (e1) at (0,-0.2);
      \coordinate (d1) at (0,-1.5);
      \coordinate (e2) at (1,1.6);
      \coordinate (d2) at (1,0.3);

      \draw[black!50, very thin] (e1) -- (d1);
      \draw[black!50, very thin] (v) -- (d1);

      \draw[black!50, very thin] (e2) -- (d2);
      \draw[black!50, very thin] (v) -- (d2);

      \draw[very thick] (v) -- (e1);
      \draw[very thick] (v) -- (e2);

      \path[name path=a] (v) -- (e1);
      \path[name path=b] (v) -- (e2);
      \path[name path=c] (v) -- (d2);

      \begin{scope}
        \path[name path=e] (v) ++(-1,0) arc[radius=1,start angle=180,end angle=90];
        \path[name intersections={of=a and e, by=i1}];
        \clip (v) -- ++(-2,0) -- (i1) -- (v);
        \draw[very thin] (v) ++(-1,0) arc[radius=1,start angle=180,end angle=90];
        \node at ($(v)+(-0.85,0.15)$) {\scriptsize $\beta$};
      \end{scope}

      \begin{scope}
        \clip (v) -- (d2) -- (e2) -- (v);
        \draw[very thin] (v) ++(-1,0) arc[radius=2,start angle=180,end angle=90];
      \end{scope}
      \node at ($(v)+(-0.7,0.8)$) {\scriptsize $\beta$};

      \begin{scope}
        \clip (v) -- (d1) -- (d2) -- (v);
        \draw[very thin] (v) ++(-1.7,0) arc[radius=1.7,start angle=180,end angle=90];
      \end{scope}
      \node at ($(v)+(-1.8,0.3)$) {$\alpha$};

      \begin{scope}
        \clip (v) -- (e1) -- (e2) -- (v);
        \draw (v) ++(-2.4,0) arc[radius=2.4,start angle=180,end angle=90];
      \end{scope}
      \node at ($(v)+(-1.9,1.7)$) {\large$\delta$};

      \draw[very thin] (d1) -- (d2);
      \draw[very thin] (e1) -- (e2);

      \node at ($(v)+(-2.7,2.3)$) {$d$};
      \node at ($(v)+(-2.75,0.8)$) {$d$};

      \node at (2.2,0.2) {\large$l_2$};
      \node at (1.2,-0.5) {\large$l_1$};
      \node at (4.5,0.3) {\large$\mathcal{P}$};
    \end{tikzpicture}
  \end{center}
  \caption{Illustration of Lemma~\ref{angle_lemma_1}}
  \label{angle_lemma_1_fig}
\end{figure}
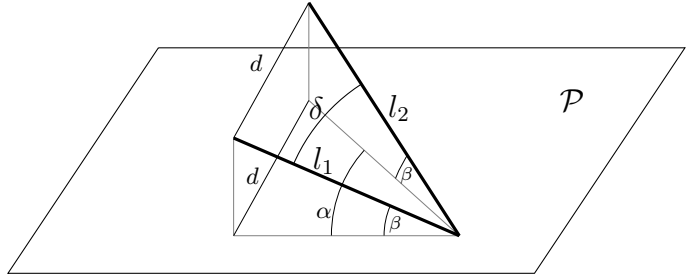

In addition, the following lemma will also be useful in our results.

\begin{lemma}
\label{angle_lemma_2}
Consider two segments, $l_1$ and $l_2$, that share a common endpoint,
with $l_1$ lying on a plane $\mathcal{P}$ (see Fig.~\ref{angle_lemma_2_fig}).
If $l_2$ forms angle $\beta<\pi/4$ with its projection onto $\mathcal{P}$,
then $\delta$, the angle between $l_1$ and $l_2$,
is at least $\beta$.
\end{lemma}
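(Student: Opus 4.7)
My plan is to mimic the law-of-cosines computation from the proof of Lemma~\ref{angle_lemma_1}, but now comparing $l_2$ to its projection directly through $l_1$. First, I would normalize by assuming $|l_1|=|l_2|=1$, let $A$ be the far endpoint of $l_1$, $B$ the far endpoint of $l_2$, and $B'$ the orthogonal projection of $B$ onto $\mathcal{P}$, so $|l_2'|=|vB'|=\cos\beta$ and $|BB'|=\sin\beta$. Let $\alpha$ denote the angle in $\mathcal{P}$ between $l_1$ and $l_2'$ (this is well-defined because $\beta<\pi/2$, so $l_2'$ is a genuine segment).

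Second, I would compute $|AB|$ in two ways. Inside $\mathcal{P}$, the law of cosines in triangle $vAB'$ gives
\begin{displaymath}
|AB'|^2 = 1 + \cos^2\beta - 2\cos\beta\cos\alpha,
\end{displaymath}
and since $BB'\perp\mathcal{P}$ while $AB'\subset\mathcal{P}$,
\begin{displaymath}
|AB|^2 = |AB'|^2 + \sin^2\beta = 2 - 2\cos\beta\cos\alpha.
\end{displaymath}
Applying the law of cosines in triangle $vAB$ also yields $|AB|^2 = 2 - 2\cos\delta$. Equating the two expressions gives the clean identity
\begin{displaymath}
\cos\delta = \cos\beta\cos\alpha.
\end{displaymath}

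Third, I would conclude by case analysis on $\alpha$. If $\alpha\leq\pi/2$, then $0\leq\cos\alpha\leq 1$, so $\cos\delta\leq\cos\beta$, and since cosine is decreasing on $[0,\pi/2]$ and $\beta<\pi/4$, we obtain $\delta\geq\beta$. If $\alpha>\pi/2$, then $\cos\delta<0$, so $\delta>\pi/2>\beta$. Either way, $\delta\geq\beta$, which is what we want.

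I do not expect any real obstacle here: the only subtlety is making sure the projection $l_2'$ is nondegenerate (guaranteed by $\beta<\pi/2$, hence certainly by $\beta<\pi/4$) and handling the sign of $\cos\alpha$ so the bound $\delta\geq\beta$ holds uniformly without assuming $\alpha\leq\pi/2$. The hypothesis $\beta<\pi/4$ itself is not actually used in the estimate $\cos\beta\cos\alpha\leq\cos\beta$, but it is harmless to keep it in the statement for consistency with Lemma~\ref{angle_lemma_1}.
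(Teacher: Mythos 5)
Your proposal is correct and follows essentially the same argument as the paper: normalize to unit lengths, compute the distance between the far endpoints both via the projection plus the height $\sin\beta$ and directly via the law of cosines, equate to get $\cos\delta=\cos\alpha\cos\beta\leq\cos\beta$, and conclude $\delta\geq\beta$. Your explicit handling of the case $\alpha>\pi/2$ is a minor extra care the paper elides (it simply uses $\cos\alpha\leq 1$, which suffices since $\cos\beta>0$), but the route is identical.
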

\begin{proof}
Assume w.l.o.g. that $|l_1| = |l_2| = 1$. Length of $a$, the projection of $l_2$
onto $\mathcal{P}$, is $\cos\beta$, and $h$, the distance of $l_2$'s endpoint
from $\mathcal{P}$ is $\sin\beta$. Let $\alpha$ be the angle between $l_1$ and $a$,
and let $b$ be the segment connecting their endpoints. By the law of cosines,
\begin{displaymath}
|b|^2 = |a|^2 + |l_1|^2 - 2|a||l_1|\cos\alpha = \cos^2\beta + 1 - 2\cos\beta\cos\alpha.
\end{displaymath}
Then,
\begin{displaymath}
|d|^2 = |h|^2 + |b|^2 = \sin^2\beta + \cos^2\beta + 1 - 2\cos\alpha\cos\beta
      = 2(1-\cos\alpha\cos\beta).
\end{displaymath}
Again, by the law of cosines,
\begin{displaymath}
|d|^2 = |l_1|^2 + |l_2|^2 - 2|l_1||l_2|\cos\delta = 2(1-\cos\delta).
\end{displaymath}
Comparing the two yields
\begin{displaymath}
\cos\delta = \cos\alpha\cos\beta.
\end{displaymath}
Since $\cos\alpha\leq1$, we get
\begin{displaymath}
\cos\delta \leq \cos\beta,
\end{displaymath}
and it follows that $\delta\geq\beta$.
\qed
\end{proof}

\begin{figure}[t]
  \begin{center}
    \begin{tikzpicture}
      [
        n/.style={circle,fill=black,draw,inner sep=1pt},
        node distance=1pt
      ]

      \coordinate (p1_upper_left) at (-1,1) {};
      \coordinate (p1_upper_right) at (6,1) {};
      \coordinate (p1_lower_left) at (-3,-2) {};
      \coordinate (p1_lower_right) at (4,-2) {};

      \draw (p1_upper_left) -- (p1_upper_right) -- (p1_lower_right) --
            (p1_lower_left) -- (p1_upper_left);

      \coordinate (v) at (3,-1.5);
      \coordinate (e1) at (0,-1.5);
      \coordinate (e2) at (1,1.6);
      \coordinate (d2) at (1,0.3);

      \draw[black!50, very thin] (e2) -- (d2);
      \draw[black!50, very thin] (v) -- (d2);

      \draw[very thick] (e1) -- (v);
      \draw[very thick] (e2) -- (v);

      \begin{scope}
        \clip (v) -- (d2) -- (e2) -- (v);
        \draw[very thin] (v) ++(-1,0) arc[radius=2,start angle=180,end angle=90];
      \end{scope}
      \node at ($(v)+(-0.7,0.8)$) {\scriptsize $\beta$};

      \begin{scope}
        \clip (v) -- (e1) -- (d2) -- (v);
        \draw[very thin] (v) ++(-0.5,0) arc[radius=0.5,start angle=180,end angle=90];
      \end{scope}
      \node at ($(v)+(-0.35,0.15)$) {\scriptsize $\alpha$};

      \begin{scope}
        \clip (v) -- (e1) -- (e2) -- (v);
        \draw[very thin] (v) ++(-2,0) arc[radius=2,start angle=180,end angle=90];
      \end{scope}
      \node at ($(v)+(-1.8,1.2)$) {\large $\delta$};
      \node at ($(v)+(-1.1,0.8)$) {\scriptsize $a$};
      \node at ($(v)+(-2.1,2.35)$) {\scriptsize $h$};

      \draw[black!50,very thin] (e1) -- (d2);
      \node at ($(v)+(-2.35,0.9)$) {\scriptsize $b$};

      \draw[very thin] (e1) -- (e2);
      \node at ($(v)+(-2.65,1.7)$) {\large $d$};

      \node at ($(v)+(-0.8,1.7)$) {\large$l_2$};
      \node at ($(v)+(-1.6,-0.27)$) {\large$l_1$};

      \node at (4.5,0.3) {\large$\mathcal{P}$};
    \end{tikzpicture}
  \end{center}
  \caption{Illustration of Lemma~\ref{angle_lemma_2}}
  \label{angle_lemma_2_fig}
\end{figure}
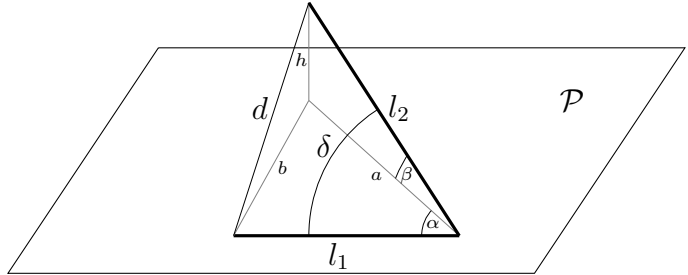

\subsection{Vertices on a Sphere}

In this subsection,
we consider the angular resolution obtained in a 3D arc diagram
using straight-line edges
drawn between vertices placed on a sphere. 
The two algorithms we present here are inspired
by a two-dimensional 
drawing algorithm 
by Formann {\it et al.}~\cite{DBLP:conf/focs/FormannHHKLSWW90}.
Our main result is the following.

\begin{theorem}
\label{sphere_thm_1}
Let $G=(V,E)$ be a graph of degree $d$. There is a 3D straight-line drawing
of $G$ with an angular resolution of $\Omega(1/d)$, with the vertices
of $G$ placed on the surface on a sphere.
\end{theorem}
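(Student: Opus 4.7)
The strategy is to combine a distance-$2$ vertex coloring of $G$ with a well-spread packing of ``anchor'' points on the sphere, so that any two neighbors of a common vertex end up at distinct, far-apart anchors, and then translate spherical separation into chord-angle separation via a simple geometric lemma.

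To set up the coloring, I would greedily color the square graph $G^2$ (in which $u\sim v$ whenever $u$ and $v$ are at $G$-distance at most $2$). Since $\deg_{G^2}(v) \leq d + d(d-1) = d^2$, this uses at most $K := d^2+1$ colors, and two distinct neighbors of any single vertex of $G$ always receive distinct colors. Then I would place $K$ anchor points $p_1,\ldots,p_K$ on the unit sphere with pairwise spherical distance $\Omega(1/d)$: a standard area-packing argument suffices, since each anchor needs only an exclusion cap of angular radius $\Theta(1/d)$ and area $\Theta(1/d^2)$, and $K=\Theta(d^2)$ such caps fit easily in the sphere's total area $4\pi$. Finally, each vertex $v$ is placed within an $o(1/d)$-size perturbation of the anchor $p_{\chi(v)}$, with generic offsets ensuring distinct positions and preserving pairwise sphere-distance $\Omega(1/d)$ between different-color vertices.

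The key geometric lemma I would prove is the following: for any three points $v,u_1,u_2$ on the unit sphere with spherical distance $\gamma$ between $u_1$ and $u_2$, the Euclidean angle $\delta$ at $v$ between chords $\overline{vu_1}$ and $\overline{vu_2}$ satisfies $\delta \geq \gamma/2$. The cleanest route is the law of sines applied to the Euclidean triangle $vu_1u_2$: one has $|u_1u_2| = 2r\sin\delta$, where $r$ is the triangle's circumradius. Since all three vertices of the triangle lie on the unit sphere, the circumcircle is a plane-section of that sphere, so $r \leq 1$; combining with $|u_1u_2| = 2\sin(\gamma/2)$ yields $\sin\delta \geq \sin(\gamma/2)$, whence $\delta \geq \gamma/2$ (trivially if $\delta \geq \pi/2$ since $\gamma\leq\pi$, and by monotonicity of $\sin$ on $[0,\pi/2]$ otherwise).

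Putting it together, for any vertex $v$ and two of its neighbors $u_1,u_2$, the distance-$2$ coloring forces $\chi(u_1)\neq\chi(u_2)$, so $u_1$ and $u_2$ sit near distinct anchors at spherical distance $\Omega(1/d)$; the chord lemma then delivers chord angle $\Omega(1/d)$ at $v$, which is the claimed resolution. The delicate step I expect to be the main obstacle is the chord-versus-spherical-distance lemma, where the circumradius argument is what lets us bypass a case analysis that would otherwise be forced (for example, Lemma~\ref{angle_lemma_1}, via tangent-plane projections, applies only when both chords make angle at most $\pi/4$ with the tangent plane at $v$, i.e., when both neighbors lie in the hemisphere centered at $v$).
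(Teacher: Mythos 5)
Your proposal is correct and follows essentially the same route as the paper: color the square $G^2$ with $O(d^2)$ colors, assign each color class to one of $\Theta(d^2)$ anchor points on the unit sphere with pairwise separation $\Omega(1/d)$, and convert that separation into an angle bound at the common endpoint. Your chord lemma via the extended law of sines ($|u_1u_2| = 2r\sin\delta$ with circumradius $r\le 1$) is just a cleaner quantitative phrasing of the inscribed-angle argument the paper uses, and your packing argument replaces the paper's explicit parallel-plane construction of the anchors without changing anything essential.
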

\begin{proof}
Let $G^2=(V,E^2)$ be the square of $G$,
that is the graph with the same set of vertices as $G$,
and an edge between vertices $(u,v)$ if there is a path of length $\leq2$
between $u$ and $v$ in $G$. Since $G$ has degree $d$, $G^2$ has degree
$\leq d(d-1)<d^2$. Therefore, we can color the vertices of $G^2$ with
at most $d^2$ colors, with the requirement that adjacent vertices have
different colors.

We place the vertices on a unit sphere $\mathcal{S}$. We define $d^2$
\emph{cluster positions} as follows. First, we cut
the circle with $d+1$ uniformly spaced parallel planes
(see Fig.~\ref{sphere_fig}), such that
the maximum distance between the center of $\mathcal{S}$ and a plane
is $h$ (thus, the distance between two neighboring planes is $2h/d$).
Then, we uniformly place $d$ points on each resulting circle.
These are the \emph{cluster positions}.

Since a coloring $\mathcal{C}$ of $G^2$ uses $\leq d^2$ colors,
we can assign distinct cluster positions to colors in $\mathcal{C}$.
To obtain a drawing of $G$, we place all vertices of the same color
in $\mathcal{C}$ on the sphere, $\mathcal{S}$, within
a small distance, $\epsilon$, around this color's cluster position,
and draw edges in $E$ as straight lines. We can remove any intersections
by perturbing the vertices slightly.

The claim is that the resulting drawing has resolution $\Omega(1/d)$.
Indeed, by setting $h = \pi/(\sqrt{1+\pi^2})$, we get $\Omega(1/d)$
minimal distance between any two planes, and $\Omega(1/d)$ minimal
distance between any two cluster positions on the same plane.
So, the distance between any two cluster positions is at least
$\Omega(1/d)$.

Now let us consider any angle $\sphericalangle abc$
formed by edges $(a,b)$ and $(b,c)$.
The edges forming $\sphericalangle abc$ define a plane, $\mathcal{P}$,
whose intersection with $\mathcal{S}$ is a circle, $C$.
Angle $\sphericalangle abc$ is inscribed in $C$, and based on the arc
$\stackrel{\frown}{ac}$. Therefore, any other angle inscribed in $C$
and based on $\stackrel{\frown}{ac}$ has the same size, in particular
the one formed by an isosceles triangle $\triangle adc$.
Since $|ad| = |cd| \leq 2$ ($\mathcal{S}$
has radius 1), and $|ac|$ is at least $\Omega(1/d)$, then
$|\sphericalangle abc| = |\sphericalangle adc|$ and is at least
$\Omega(1/d)$.
\qed
\end{proof}

\begin{figure}[t]
  \begin{center}
    \begin{tikzpicture}
      [
        n/.style={circle,fill=black,draw,inner sep=1pt},
        x={(1cm,0)},y={(-1.0mm,-1.0mm)},z={(0,1cm)},
      ]

      \node[n] at (0,0,0) {};

      \def\r{2.4}

      \foreach \h in {0,1.2,-1.2}{
        \pgfmathparse{sqrt(\r*\r-\h*\h)}
        \let\rr\pgfmathresult
        \filldraw[thick,fill=black!10] ({\rr*cos(0)},{\rr*sin(0)},\h)
        \foreach \t in {5,10,...,355}{
          -- ({\rr*cos(\t)},{\rr*sin(\t)},\h)
        } -- cycle;
      }

      \foreach \a in {0,10,...,170}{
        \draw[gray,very thin] ({\r*cos(\a)},{\r*sin(\a)},0)
        \foreach \t in {5,10,...,355}{
          -- ({\r*cos(\t)*cos(\a)},{\r*cos(\t)*sin(\a)},{\r*sin(\t)})
        } -- cycle;
      }

      \foreach \a in {-165,-150,...,165}{
        \pgfmathparse{sin(\a)*\r}
        \let\h\pgfmathresult
        \pgfmathparse{sqrt(\r*\r-\h*\h)}
        \let\rr\pgfmathresult
        \draw[gray,very thin] ({\rr*cos(0)},{\rr*sin(0)},\h)
        \foreach \t in {5,10,...,355}{
          -- ({\rr*cos(\t)},{\rr*sin(\t)},\h)
        } -- cycle;
      }

      \foreach \h in {0,1.2,-1.2}{
        \pgfmathparse{sqrt(\r*\r-\h*\h)}
        \let\rr\pgfmathresult
        \draw (-3.5,-3.5,\h) -- (-3.5,3.5,\h) -- (3.5,3.5,\h) -- (3.5,-3.5,\h) -- cycle;
        \foreach \t in {0,30,...,330}{
          \node[n,red] at ({\rr*cos(\t)},{\rr*sin(\t)},\h) {};
        }
      }

    \end{tikzpicture}
  \end{center}
  \vspace*{-12pt}
  \caption{Sphere cut with equidistant planes. Red points are the \emph{cluster positions}.}
  \label{sphere_fig}
\end{figure}
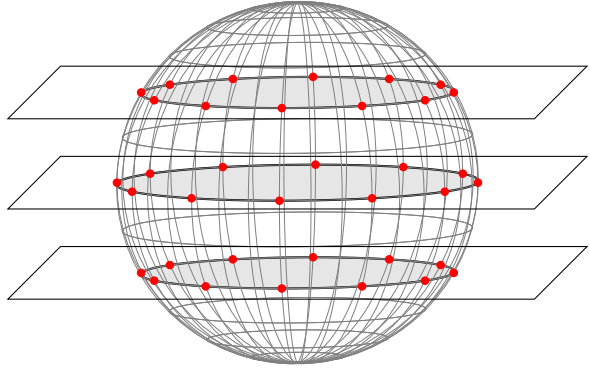

In addition, we also have the following.

\begin{corollary}
\label{sphere_thm_2}
Let $G=(V,E)$ be a planar graph of degree $d$. There is a 3D straight-line
drawing of $G$ with an angular resolution of $\Omega(1/d^{1/2})$, with the vertices
of $G$ placed on the surface of a sphere.
\end{corollary}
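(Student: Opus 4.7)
The plan is to follow the argument of Theorem~\ref{sphere_thm_1} essentially verbatim, replacing the crude estimate $\chi(G^2)\le d^2$ (which came merely from the fact that $G^2$ has maximum degree $<d^2$) by a tighter bound specific to planar graphs: namely, that $\chi(G^2)=O(d)$ whenever $G$ is planar with maximum degree $d$. This is a classical Wegner-type result, with the $O(d)$ form established by Jonas and then sharpened by van den Heuvel--McGuinness and Molloy--Salavatipour; I would invoke it here as a black box, since a self-contained proof would require a discharging argument orthogonal to the drawing ideas of this section.

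Given such a coloring $\mathcal{C}$ of $G^2$ with at most $cd$ colors, for some absolute constant $c$, the cluster-position construction on the unit sphere $\mathcal{S}$ needs to be scaled accordingly: instead of $d$ parallel planes each carrying $d$ uniformly spaced points, we would cut $\mathcal{S}$ with $\Theta(\sqrt{d})$ uniformly spaced parallel planes and place $\Theta(\sqrt{d})$ uniformly spaced points on each of the resulting circles. This yields $\Theta(d)$ cluster positions in all; choosing the analog of $h=\pi/\sqrt{1+\pi^{2}}$ used in Theorem~\ref{sphere_thm_1} guarantees that both the inter-plane spacing and the within-circle spacing on $\mathcal{S}$ are $\Omega(1/\sqrt{d})$, so any two cluster positions sit at Euclidean distance at least $\Omega(1/\sqrt{d})$.

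The rest of the argument then mirrors Theorem~\ref{sphere_thm_1} line by line. We bijectively assign distinct cluster positions to the $\le cd$ colors of $\mathcal{C}$, place each vertex within an $\epsilon$-neighborhood of its color's cluster position, draw edges as straight lines, and perturb slightly to eliminate any accidental intersections. For any angle $\sphericalangle abc$ formed by edges $(a,b),(b,c)\in E$, the endpoints $a$ and $c$ are $G^2$-adjacent via $b$, hence receive different colors and lie at different cluster positions, so $|ac|=\Omega(1/\sqrt{d})$. Combined with $|ab|,|bc|\le 2$, the inscribed-angle computation from Theorem~\ref{sphere_thm_1} (transferring $\sphericalangle abc$ to the isosceles triangle $\triangle adc$ inscribed in the same circle) yields $\sphericalangle abc=\Omega(1/\sqrt{d})$. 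The sole nontrivial new ingredient is thus the $O(d)$-chromaticity of $G^2$ for planar $G$, which is where I expect the main difficulty to live; modulo that fact, the corollary is a straightforward rescaling of the Theorem~\ref{sphere_thm_1} construction.
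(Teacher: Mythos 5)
Your proposal is correct and follows essentially the same route as the paper: rerun the Theorem~\ref{sphere_thm_1} construction with $\Theta(d)$ cluster positions spaced $\Omega(1/d^{1/2})$ apart, justified by an $O(d)$ bound on the number of colors needed for $G^2$ when $G$ is planar. If anything, your formulation of the key fact is the more accurate one: the paper asserts that the \emph{degree} of $G^2$ is $O(d)$, which is false in general (a planar tree in which a degree-$d$ vertex has $d$ neighbors each with $d-2$ further private neighbors gives a vertex of degree $\Theta(d^2)$ in $G^2$), whereas what is actually true, needed, and supplied by the cited literature is that the \emph{chromatic number} of $G^2$ is $O(d)$ for planar $G$.
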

\begin{proof}
The proof
is a direct consequence of applying the algorithm from the proof of
Theorem~\ref{sphere_thm_2} and the fact that the degree of $G^2$,
the square of a planar graph, $G$,
has degree $O(d)$ \cite{DBLP:conf/focs/FormannHHKLSWW90}.
\qed
\end{proof}

Thus, we can produce 3D arc diagram drawings 
of planar graphs that achieve an angular resolution that is within a constant
factor of optimal. Admittedly, this type of drawing is probably
not going to be very pretty when rendered, say, as a video fly-through
on a 2D screen, as this type of drawing
is unlikely to project to a planar drawing in any direction.

\subsection{Stationary Vertices}
In this subsection, we show how to overcome the drawback of the above method,
in that we show how to start with any existing 2D straight-line drawing
and dramatically
improve the angular resolution for that drawing using a 3D arc diagram
rendering that projects perpendicularly to the 2D drawing.

\begin{theorem}
Let $D(G)$ be a straight-line drawing of a graph, $G$, with arbitrary, but
distinct, placements for its vertices in the base plane.
There is a 3D arc diagram drawing of $G$ with the same vertex
placements as $D(G)$ and with an angular resolution at least
$\Omega(1/d)$, where $d$ is the degree of $G$,
regardless of the angular resolution of $D(G)$.
\label{stationary_thm_1}
\end{theorem}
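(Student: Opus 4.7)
The plan is to use edge coloring to spread the tangent directions out of the base plane $\mathcal{P}_1$, so that incident edges are separated in 3D regardless of how close together their 2D segments point in $D(G)$. First, I would apply Vizing's theorem to color the edges of $G$ with at most $d+1$ colors so that every two incident edges receive distinct colors. Set $\alpha_0 = \pi/(4(d+1))$ and, for each color $c \in \{1,\dots,d+1\}$, define an elevation angle $\alpha_c = c\cdot\alpha_0$. All elevations then lie in $(0,\pi/4]$, and any two distinct $\alpha_c, \alpha_{c'}$ differ by at least $\alpha_0 = \Omega(1/d)$.

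Next, for each edge $e=(a,b)$ of color $c$, draw it as a circular arc in the plane $\mathcal{P}_e$ that contains the chord $ab$ and is perpendicular to $\mathcal{P}_1$, choosing the arc so that it forms angle $\alpha_c$ with $ab$ at both endpoints, with all arcs placed on the same side of $\mathcal{P}_1$. Since $\alpha_c \le \pi/4 < \pi/2$ such an arc exists, and the perpendicular projection of the arc onto $\mathcal{P}_1$ is exactly the segment $ab$, so the drawing satisfies condition (2) and preserves the vertex placements of $D(G)$.

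The angular resolution bound reduces to one geometric observation. At either endpoint, the tangent to the arc lies in $\mathcal{P}_e$ and thus makes elevation angle exactly $\alpha_c$ with $\mathcal{P}_1$, with its projection pointing along $ab$. For two incident edges $e_1, e_2$ at a common vertex with colors $c_1 \ne c_2$, projection directions making some angle $\theta$ in $\mathcal{P}_1$, and unit tangents $t_1, t_2$, a short law-of-cosines computation (in the spirit of Lemma~\ref{angle_lemma_2}) gives
\[
t_1\cdot t_2 \;=\; \cos\alpha_{c_1}\cos\alpha_{c_2}\cos\theta + \sin\alpha_{c_1}\sin\alpha_{c_2} \;\le\; \cos(\alpha_{c_1}-\alpha_{c_2}),
\]
so the 3D angle between $t_1$ and $t_2$ is at least $|\alpha_{c_1}-\alpha_{c_2}|\ge \alpha_0 = \Omega(1/d)$, uniformly in $\theta$.

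The main subtlety, rather than a true obstacle, is recognizing that the elevation out of $\mathcal{P}_1$ is the only quantity we may freely choose per edge (the projection direction is forced by the given drawing), and that a single edge coloring is enough to force every incident pair to be elevated by different amounts. Once this is in place, the $\theta$-independence of the inequality above is precisely what makes the bound survive no matter how bad the angular resolution of $D(G)$ is; everything else (Vizing's theorem, the arithmetic progression of elevations, and the convexity-style step $\cos\alpha_{c_1}\cos\alpha_{c_2}\cos\theta+\sin\alpha_{c_1}\sin\alpha_{c_2}\le\cos(\alpha_{c_1}-\alpha_{c_2})$) is standard.
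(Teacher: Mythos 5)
Your proposal is correct and follows essentially the same route as the paper: edge-color $G$ with at most $d+1$ colors via Vizing's theorem, assign each color a distinct elevation angle in an arithmetic progression inside $[0,\pi/4]$, and conclude that incident arcs with different elevations are separated by at least the elevation gap $\Omega(1/d)$ independently of the projected angle $\theta$ (the paper reaches this last step by invoking its Lemma~\ref{angle_lemma_2}, whereas you verify the dot-product identity $t_1\cdot t_2=\cos\alpha_{c_1}\cos\alpha_{c_2}\cos\theta+\sin\alpha_{c_1}\sin\alpha_{c_2}\le\cos(\alpha_{c_1}-\alpha_{c_2})$ directly, which is the same computation). The only detail the paper adds is a brief remark that any arcs that happen to touch in 3D can be perturbed without losing the angular separation.
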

\begin{proof}
Since we are not allowed to move vertices, and edges have to lie
on planes perpendicular to the \emph{base plane}, we are restricted
to selecting angles $\alpha_e$ for edges $e$ of $G$. We do it by
utilizing classical edge coloring, observing that the ``entry''
and ``exit'' angles for each vertex need to match.

First, we compute an edge
coloring $\mathcal{C}$ of $G$ with $c$ colors ($c\leq d+1$).
Then, for each edge $e$, if its color in $\mathcal{C}$ is $i$
($i = 0,1,\ldots,c-1$), we set its angle to be
$\alpha_e = i\cdot\pi/4(c-1)$.
For any two edges $e_1$, $e_2$, the difference between their
angles $\alpha_{e_1}$ and $\alpha_{e_2}$
is at least $\pi/4(c-1)$ (let $\alpha_{e_1} < \alpha_{e_2}$;
consider the plane, $\mathcal{P}$, determined by both tangent lines
having angle $\alpha_{e_1}$; the angle between $e_2$ and the plane
$\mathcal{P}$, on which tangent of $e_1$ lies, is
$\alpha_{e_2} - \alpha_{e_1}$).
Therefore, by Lemma~\ref{angle_lemma_2},
the angle between $e_1$ and $e_2$ in the arc diagram is also at least
$\pi/4(c-1) = \Omega(1/d)$.

It is unlikely that any pairs of the arcs touch each other in 3D, but
if any pair of them do touch, we can perturb one of them slightly to
eliminate the crossing, while still keeping the angular separation for
every pair of incident edges to be $\Omega(1/d)$.
\qed
\end{proof}

In addition, through the use of a slanted 3D arc diagram rendering, we
can produce a drawing with angular resolution that is within a constant
factor of optimal, with each arc projecting to its corresponding 
straight-line edge in some direction.

\begin{theorem}
\label{stationary_vertices_thm_2}
Let $D(G)$ be a straight-line drawing of a graph, $G$, 
with arbitrary, but distinct,
placements for its vertices in the base plane. There is a slanted 3D arc
diagram drawing of $G$ with the same vertex placements as $D(G)$ and
with an angular resolution at least $\Omega(1/d^{1/2})$, where $d$ is the
degree of $G$, regardless of the angular resolution of $D(G)$.
\end{theorem}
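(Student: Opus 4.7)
My plan is to extend the edge-coloring approach of Theorem~\ref{stationary_thm_1} by exploiting \emph{both} angular parameters of a slanted arc, the in-plane angle $\alpha_e$ and the tilt angle $\beta_e$, rather than $\alpha_e$ alone. Spreading the $d+1$ classes of an edge coloring across a two-dimensional $\sqrt{d}\times\sqrt{d}$ grid of $(\alpha,\beta)$-pairs (instead of along a one-dimensional list of $\alpha$-values) quadratically widens the per-coordinate spacing, so the $\Omega(1/d)$ bound of Theorem~\ref{stationary_thm_1} should lift to $\Omega(1/\sqrt{d})$.

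Concretely, I would first apply Vizing's theorem to edge-color $G$ with $c\le d+1$ colors, let $k=\lceil\sqrt{c}\,\rceil=O(\sqrt{d})$, and identify the colors with pairs $(i,j)\in\{0,\ldots,k-1\}^2$. For an edge $e$ of color $(i,j)$ I would set
\[
\alpha_e \;=\; \pi/8+i\pi/(8k),\qquad \beta_e \;=\; \pi/8+j\pi/(8k),
\]
so both angles lie in $[\pi/8,\pi/4]$ and consecutive grid values are $\Omega(1/\sqrt{d})$ apart. Since the coloring is proper, any two incident edges receive different grid cells and hence differ in at least one of $\alpha,\beta$ by $\Omega(1/\sqrt{d})$. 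Writing the tangent at a shared vertex $v$ in the local orthonormal frame $(\hat{u}_e,\hat{v}_e,\hat{w})$ as
\[
t_e \;=\; \cos\alpha_e\,\hat{u}_e + s_e\sin\alpha_e\cos\beta_e\,\hat{v}_e + \sin\alpha_e\sin\beta_e\,\hat{w},
\]
with $s_e\in\{\pm1\}$ a per-edge choice of the side to which $\mathcal{P}_2$ tilts, and applying the law-of-cosines argument of Lemma~\ref{angle_lemma_1} generalized to allow the two tangents to form different elevations $\theta_i=\arcsin(\sin\alpha_{e_i}\sin\beta_{e_i})\le\pi/6$ with the base plane, I would derive $1-\cos\delta = 2\sin^2((\theta_1-\theta_2)/2)+2\cos\theta_1\cos\theta_2\sin^2(\psi/2)$, where $\psi$ is the base-plane angle between the tangents' projections.

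The main obstacle is the dependence on the in-plane angle $\gamma$ between $\hat{u}_{e_1}$ and $\hat{u}_{e_2}$: since $D(G)$ may have arbitrarily poor angular resolution, $\gamma$ is adversarial. A direct expansion of $t_{e_1}\cdot t_{e_2}$ yields $A\cos\gamma+B\sin\gamma+C$ for coefficients depending on $(\alpha_{e_i},\beta_{e_i},s_i)$, whose maximum $\sqrt{A^2+B^2}+C$ turns out, to leading order in the grid gaps $(\delta_\alpha,\delta_\beta)$, to satisfy $1-\sqrt{A^2+B^2}-C \approx \tfrac{1}{2A}(\cos\alpha\sin\beta\,\delta_\alpha+\sin\alpha\cos\beta\,\delta_\beta)^2$; this quadratic form is only positive semi-definite and vanishes along the kernel of the elevation differential, so pairs of grid cells that coincidentally share the elevation $\sin\alpha_e\sin\beta_e$ (for instance $(\pi/4,\pi/8)$ and $(\pi/8,\pi/4)$) admit a pessimal $\gamma$ at which the 3D tangents nearly coincide. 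I would resolve this by using the per-edge sign $s_e$ as an additional degree of freedom---each of the four sign combinations $(s_1,s_2)\in\{\pm1\}^2$ shifts the pessimal $\gamma$ to a different value, so at each vertex a greedy or local 2-SAT argument on the $O(d)$ signs forces every prescribed $\gamma$ to be bounded away from the associated pessimal value, securing $\delta=\Omega(1/\sqrt{d})$. As in Theorem~\ref{stationary_thm_1}, any accidental 3D contact among the arcs is removed by an infinitesimal perturbation of the angles.
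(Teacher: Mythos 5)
Your first half is precisely the paper's construction: index the $d+1$ classes of a proper edge coloring by a $\lceil\sqrt{d}\rceil\times\lceil\sqrt{d}\rceil$ grid of pairs $(\alpha,\beta)$, using one coordinate for the in-plane arc angle and the other for the tilt of the arc's plane, and invoke Lemmas~\ref{angle_lemma_1} and~\ref{angle_lemma_2} to separate incident arcs. Where you diverge is in refusing to believe that distinct grid cells automatically yield well-separated tangents, and your worry is legitimate: for a fixed color, as the chord direction at $v$ rotates, the tangent direction sweeps the entire latitude circle at elevation $\arcsin(\sin\alpha\sin\beta)$ above the base plane, so two colors with coinciding elevations (e.g.\ $(\alpha,\beta)$ and $(\beta,\alpha)$) admit a chord angle $\gamma$ at which the two tangents coincide exactly; in that configuration the tangent of one arc lies in the plane of the other, so neither lemma gives a nontrivial bound. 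This degeneracy is not an artifact of your parametrization, and the paper's two-case appeal to the lemmas does not address it.

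However, your repair does not close the gap. First, the sign $s_e$ changes only the phase along the latitude circle, not the circle itself, so every one of the four sign combinations still has a pessimal $\gamma$; your claim that the four combinations place these pessima at distinct values fails exactly when the horizontal offsets of the two tangents from their chords are equal or both small, in which case two (or all four) pessimal values collide and none is guaranteed to sit $\Omega(1/\sqrt{d})$ away from the actual $\gamma$. Second, the constraint ``pick $(s_e,s_f)$ to avoid the bad combination'' involves a Boolean $s_e$ shared by \emph{both} endpoints of $e$, so the instance is a global labeling problem, not a per-vertex one, and you give no argument that the induced system (which can contain forced-disequality cycles) is satisfiable, let alone satisfiable with quantitative slack. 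Third, you treat only exactly-equal elevations, but $d+1$ elevations packed into a bounded interval force many pairs with elevation gap $o(1/\sqrt{d})$ --- indeed $\Theta(\sqrt{d})$ colors can cluster near a single level curve of $\sin\alpha\sin\beta$ --- and every such pair needs the horizontal-separation argument you have not supplied. Until the sign assignment is proved to exist and to give an $\Omega(1/\sqrt{d})$ angular separation for every near-equal-elevation pair, the argument is incomplete.
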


\begin{proof}
Let $C$ be a set of $\lceil d^{1/2}\rceil + 1$ uniformly distributed
angles from $0$ to $\pi/4$. Define a set of $d+1$ ``colors'' as distinct
pairs, $(\alpha,\beta)$, where $\alpha$ and $\beta$ are each in $C$.
Compute an edge coloring of $G$ using these colors. Now let $e$ be an edge
in $G$, which is colored with $(\alpha,\beta)$. Draw the edge, $e$, using
a circular arc that lies in a plane, $\mathcal{P}$, that makes an angle of
$\alpha$ with the \emph{base plane} and
which has a tangent in $P$ that forms an angle of $\beta$ at each
endpoint of $e$.
(For instance, in Fig.~\ref{fig:3darcs}b,
we give a slanted 3D arc diagram based on the edge coloring of the graph
in Fig.~\ref{fig:3darcs}a, corresponding to the following 
    ($\alpha_e$, $\beta_e$) ``colors:''
    \textcolor{blue}{(0$^\circ$, 0$^\circ$)},
    \textcolor{magenta}{($22.5^\circ$, 0$^\circ$)},
    \textcolor{Sepia}{($45^\circ$, 0$^\circ$)},
    \textcolor{orange}{($22.5^\circ$, 22.5$^\circ$)},
    \textcolor{green}{($45^\circ$, 45$^\circ$)}.)

The claim is that every pair of incident edges is separated by an
angle of size at least $\Omega(1/d^{1/2})$. So suppose $e$ and $f$ are
two edges incident on the same vertex, $v$.  Let $(\alpha_e,\beta_e)$
be the color of $e$ and let $(\alpha_f,\beta_f)$ be the color of $f$.
Since $e$ and $f$ are incident and we computed a valid coloring for
$G$, $\alpha_e \not= \alpha_f$ or $\beta_e \not=\beta_f$. In either
case, this implies that $e$ and $f$ are separated by an angle of size
at least $\Omega(1/d^{1/2})$ (by Lemma~\ref{angle_lemma_1} if
$\beta_e = \beta_f$, by Lemma~\ref{angle_lemma_2} otherwise),
which establishes the claim.
As previously, we can perturb the arcs to eliminate crossings in 3D.
\qed
\end{proof}

Thus, we can achieve optimal angular resolution in a 3D arc diagram
for any graph, $G$, to within a constant factor, for any arbitrary
placement of vertices of $G$ in the plane. Note, however, that even if
$D(G)$ is planar, the 3D arc diagram this algorithm produces,
when projected to the \emph{base plane}, may create edge crossings in the
projected drawing.  It would be nice, therefore, to have 3D arc
diagrams that could have good angular resolution and also have planar
perpendicular projections in the \emph{base plane}.

\subsection{Free Vertices}
In this section, we show how to take any 2D straight-line 
drawing with good angular resolution and convert it to a 3D arc diagram
with angular resolution that is within a constant factor of optimal.
Moreover, this is the result that makes use of a localized edge coloring.

\begin{theorem}
Let $D(G)$ be a straight-line drawing of a graph, $G$, with arbitrary,
but distinct, placement for its vertices in the base plane, and
$\Omega(1/d)$ angular resolution. There is
a 3D arc diagram drawing of $G$ with the same vertex placements as $D(G)$
and with angular resolution at least $\Omega(1/d^{1/2})$, where $d$
is the degree of $G$, such that all arcs project perpendicularly
as straight lines onto the base plane.
\label{free_vertices_thm}
\end{theorem}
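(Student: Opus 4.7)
The plan is to apply an $L$-localized edge coloring to $G$ with respect to the rotation system inherited from $D(G)$, taking $L$ to be the smallest even integer of order $d^{1/2}$. The greedy algorithm of Section~3 then yields a valid edge coloring using at most $c = 2L+1 = O(d^{1/2})$ colors. I would assign the $c$ colors bijectively to $c$ angles $\alpha_1 < \alpha_2 < \cdots < \alpha_c$ spread uniformly in $[0,\pi/4]$, so that consecutive angles differ by $\pi/(4(c-1)) = \Omega(1/d^{1/2})$. Each edge $e$ is then drawn as a circular arc in the vertical plane through its chord in $D(G)$, with endpoint angle $\alpha_e$ equal to the angle assigned to its color. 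Because every arc lies in a vertical plane, its perpendicular projection onto the base plane coincides with its chord in $D(G)$.

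The angular-resolution analysis will split into two cases, mirroring the strategy used in Theorems~\ref{stationary_thm_1} and~\ref{stationary_vertices_thm_2}. Fix a vertex $u$ and two incident edges $e_1, e_2$. If $e_1$ and $e_2$ carry different colors, then $|\alpha_{e_1} - \alpha_{e_2}| \geq \Omega(1/d^{1/2})$, and Lemma~\ref{angle_lemma_2} (applied exactly as in the proof of Theorem~\ref{stationary_thm_1}) implies that the 3D angle between their tangents at $u$ is at least $\Omega(1/d^{1/2})$. If $e_1$ and $e_2$ carry the same color, the $L$-localized property forces $e_2$ to lie outside the $L$-neighborhood of $e_1$ around $u$ on \emph{both} sides of the rotation, so the cyclic distance from $e_1$ to $e_2$ in the rotation around $u$ exceeds $L/2$ in both directions. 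Since $D(G)$ has 2D angular resolution $\Omega(1/d)$, every consecutive gap in the rotation has angular width $\Omega(1/d)$, whence the 2D angle $\theta$ at $u$ between the two chord directions satisfies $\theta \geq (L/2 + 1) \cdot \Omega(1/d) = \Omega(1/d^{1/2})$. Because both edges share the same tilt angle $\alpha \leq \pi/4$ in this case, Lemma~\ref{angle_lemma_1} yields a 3D angle of at least $\theta/2 = \Omega(1/d^{1/2})$.

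The hardest part will be the ``same color'' case: I must verify that the two-sided $L$-localized separation (an $\Omega(d^{1/2})$ gap in rotation distance) combines cleanly with the $\Omega(1/d)$ angular-resolution hypothesis on $D(G)$ to produce a genuine $\Omega(1/d^{1/2})$ lower bound on the 2D angle $\theta$ at the shared endpoint, before I apply Lemma~\ref{angle_lemma_1}. Once this has been established, any incidental arc-arc intersections in 3D can be eliminated by an arbitrarily small perturbation of the offending arcs, since all of the angular lower bounds above are strict inequalities that are stable under sufficiently small displacements.
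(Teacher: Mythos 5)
Your proposal is correct and follows essentially the same route as the paper: an $L$-localized edge coloring with $L=\Theta(d^{1/2})$, angles $\alpha_e$ spread uniformly over $[0,\pi/4]$ by color, Lemma~\ref{angle_lemma_2} for the different-color case, and the two-sided $L/2$ rotation separation combined with the $\Omega(1/d)$ resolution of $D(G)$ and Lemma~\ref{angle_lemma_1} for the same-color case. The only cosmetic difference is that you fix $L=\Theta(d^{1/2})$ up front, whereas the paper derives the bound $\Omega(\min\{1/L,\,L/d\})$ and then optimizes $L$; the content is identical.
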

\begin{proof}
The algorithm is similar to the one from the proof of
Theorem~\ref{stationary_thm_1}. This time, however, we first compute an
$L$-localized edge coloring, $\mathcal{C}$, of $G$ utilizing $c$ colors
($c \leq 2L+1$). Then, as previously, we assign angle
$\alpha_e = i\cdot\pi/4(c-1)$ to an edge $e$ of color $i$ in
$\mathcal{C}$ ($i=0,1,\ldots,c-1$).

Let us consider two arcs, $e$ and $f$, incident on a vertex, $v$.
If $\alpha_e \neq \alpha_f$, then the angle between $e$ and $f$
is at least $\pi/(4c) = \Omega(1/L)$, by Lemma~\ref{angle_lemma_2}.
Otherwise, $\alpha_e = \alpha_f$, and $e$ and $f$ have the same
color in $\mathcal{C}$. By the definition of $L$-localized edge
coloring, $e$ and $f$ are separated by at least $L/2$ edges around
$v$. Because $D(G)$ has resolution $\Omega(1/d)$, the angle
between $e$ and $f$ in $D(G)$ is $\Omega(L/d)$. Thus, by
Lemma~\ref{angle_lemma_1}, the angle between $e$ and $f$
is also $\Omega(L/d)$.
Therefore, the angle between $e$ and $f$ is
$\Omega(\min\{1/L,L/d\})$. We achieve the
advertised angular resolution by setting $L=d^{1/2}$.
\qed
\end{proof}

Theorem~\ref{free_vertices_thm} shows that we can achieve
$\Omega(1/d^{1/2})$ angular resolution in a 3D arc diagram drawing
of a graph, $G$, with arcs projecting perpendicularly onto the base plane
as straight-line segments, if there is a straight-line drawing of $G$
on a plane with an angular resolution of $\Omega(1/d)$. The following is
an immediate consequence.

\begin{corollary}
There is a 3D arc diagram drawing of any planar graph, $G$,
with straight-line projection onto the base plane,
and an angular resolution of $\Omega(1/d^{1/2})$.
\end{corollary}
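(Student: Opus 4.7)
The plan is to combine two results already on the table: the planar-graph angular-resolution bound of Formann et al.~\cite{DBLP:conf/focs/FormannHHKLSWW90} cited in the introduction, and Theorem~\ref{free_vertices_thm} just above. The former guarantees that any planar graph $G$ of maximum degree $d$ admits a straight-line drawing $D(G)$ in the base plane whose angular resolution is $\Omega(1/d)$. As noted in the introduction, the resulting drawing need not itself be crossing-free, but that turns out not to matter for the application we have in mind.

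I would then feed $D(G)$ directly into the construction of Theorem~\ref{free_vertices_thm}. Its hypotheses are exactly satisfied: $D(G)$ assigns distinct positions to the vertices of $G$ in the base plane and has angular resolution $\Omega(1/d)$. The conclusion of the theorem then produces a 3D arc diagram of $G$ that reuses those vertex positions, whose arcs project perpendicularly as straight-line segments onto $D(G)$, and whose angular resolution is $\Omega(1/d^{1/2})$. This perpendicular projection is precisely the straight-line projection onto the base plane demanded by the corollary.

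There is no genuine obstacle here; the one point worth flagging is that Theorem~\ref{free_vertices_thm} places no planarity requirement on $D(G)$, so the fact that the Formann et al.\ drawing may contain crossings is harmless. No new geometric calculation, coloring argument, or appeal to Lemmas~\ref{angle_lemma_1} or~\ref{angle_lemma_2} is needed beyond what those two cited results already provide.
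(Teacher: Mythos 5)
Your proposal is correct and matches the paper's own proof exactly: the paper likewise invokes Formann et al.\ for a straight-line drawing of the planar graph with $\Omega(1/d)$ angular resolution and then applies Theorem~\ref{free_vertices_thm}. Your added remark that the intermediate drawing need not be planar is also consistent with the paper, which notes immediately afterward that the 2D projection is not necessarily planar.
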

\begin{proof}
By~\cite{DBLP:conf/focs/FormannHHKLSWW90}, we can draw $G$
in a straight-line manner on a plane with an angular
resolution of $\Omega(1/d)$.
\qed
\end{proof}

Admittedly, the 2D projection of this graph is not necessarily planar.
We can nevertheless also achieve the following.

\begin{corollary}
There is a 3D arc diagram drawing of any ordered tree, $T$,
with straight-line projection onto the base plane,
and an angular resolution of $\Omega(1/d^{1/2})$.
\end{corollary}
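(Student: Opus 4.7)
The plan is to reduce this corollary to Theorem~\ref{free_vertices_thm} in exactly the same way the preceding corollary handles planar graphs, only this time invoking a result tailored to ordered trees. The hypothesis of Theorem~\ref{free_vertices_thm} is that we already have a 2D straight-line drawing $D(G)$ of the input graph with angular resolution $\Omega(1/d)$; given such a drawing, the theorem upgrades it to a 3D arc diagram, with the same vertex placements and with arcs that project perpendicularly as straight line segments onto the base plane, achieving angular resolution $\Omega(1/d^{1/2})$.

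So first I would cite the result of Duncan \emph{et al.}~\cite{degkn-dtp-11}, mentioned earlier in the introduction, which states that any ordered tree $T$ of maximum degree $d$ admits a straight-line planar drawing that respects the given rotation system and has angular resolution $\Omega(1/d)$. This provides the 2D drawing $D(T)$ required as input to Theorem~\ref{free_vertices_thm}.

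Then I would apply Theorem~\ref{free_vertices_thm} directly to $D(T)$. The theorem's construction uses the $L$-localized edge coloring together with the two trigonometric lemmas (Lemmas~\ref{angle_lemma_1} and~\ref{angle_lemma_2}), and it only requires distinctness of vertex positions and the $\Omega(1/d)$ angular resolution of the planar drawing, both of which $D(T)$ satisfies. The output is a 3D arc diagram whose arcs project perpendicularly to straight-line segments in the base plane and whose angular resolution is $\Omega(1/d^{1/2})$, as claimed.

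There is really no obstacle here beyond correctly quoting the two ingredients; the proof is essentially a one-line composition, entirely parallel to the preceding planar-graph corollary, with the Duncan \emph{et al.} ordered-tree drawing playing the role that the Formann \emph{et al.}~\cite{DBLP:conf/focs/FormannHHKLSWW90} planar drawing played there.
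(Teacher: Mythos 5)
Your proposal is correct and matches the paper's own proof exactly: both cite Duncan \emph{et al.}~\cite{degkn-dtp-11} for a straight-line planar drawing of the ordered tree with angular resolution $\Omega(1/d)$ and then apply Theorem~\ref{free_vertices_thm} to obtain the 3D arc diagram with resolution $\Omega(1/d^{1/2})$.
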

\begin{proof}
By Duncan {\it et al.}~\cite{degkn-dtp-11},
we can draw $T$
in a straight-line manner on a plane with an angular resolution of $\Omega(1/d)$.
\qed
\end{proof}

In addition, the area of the projection of the drawings produced by the
previous two corollaries is polynomial.

\section{Conclusion}

We have given efficient algorithms for drawing 3D arc diagrams
that achieve polynomial area in the base plane or sphere that contains
all the vertices while also achieving good angular resolution.
Since our algorithms deal with arc intersections via arc perturbation,
the results may not be satisfactory, as the perturbed edges will still
be very close.
Therefore, one direction for future work is a related 
resolution question of 
what volumes are achievable if, in addition to angular resolution, we also
insist that every circular arc always be at least unit 
distance from every other 
non-incident arc edge.

\section*{Acknowledgements}
We thank Joe Simons, Michael Bannister, Lowell Trott,
Will Devanny, and Roberto Tamassia for helpful discussions regarding 
angular resolution in 3D drawings.

\bibliographystyle{abbrv}
\bibliography{refs}

\begin{thebibliography}{10}

\bibitem{Angel13}
P.~Angelini, D.~Eppstein, F.~Frati, M.~Kaufmann, S.~Lazard, T.~Mchedlidze,
  M.~Teillaud, and A.~Wolff.
\newblock Universal point sets for planar graph drawings with circular arcs,
  2013.
\newblock manuscript.

\bibitem{Brandes99B}
U.~Brandes.
\newblock Hunting down {Graph B}.
\newblock In J.~Kratochv{\'\i}yl, editor, {\em Graph Drawing}, volume 1731 of
  {\em LNCS}, pages 410--415. Springer, 1999.

\bibitem{BGT00}
U.~Brandes, G.~Shubina, and R.~Tamassia.
\newblock Improving angular resolution in visualizations of geographic
  networks.
\newblock In W.~Leeuw and R.~Liere, editors, {\em Data Visualization},
  Eurographics, pages 23--32. Springer, 2000.

\bibitem{bgt-iarvg-00}
U.~Brandes, G.~Shubina, and R.~Tamassia.
\newblock Improving angular resolution in visualizations of geographic
  networks.
\newblock In {\em Proc. Joint Eurographics --- IEEE TCVG Symposium on
  Visualization (VisSym~'00)}, 2000.

\bibitem{cdgk-dcg-01}
C.~Cheng, C.~Duncan, M.~Goodrich, and S.~Kobourov.
\newblock Drawing planar graphs with circular arcs.
\newblock {\em Discrete \& Computational Geometry}, 25(3):405--418, 2001.

\bibitem{CS96}
A.~Cimikowski and P.~Shope.
\newblock A neural-network algorithm for a graph layout problem.
\newblock {\em IEEE Trans. on Neural Networks}, 7(2):341--345, 1996.

\bibitem{celr-tdgd-97}
R.~Cohen, P.~Eades, T.~Lin, and F.~Ruskey.
\newblock Three-dimensional graph drawing.
\newblock {\em Algorithmica}, 17(2):199--208, 1997.

\bibitem{DV02}
H.~Djidjev and I.~Vrt’o.
\newblock An improved lower bound for crossing numbers.
\newblock In P.~Mutzel, M.~J{\"u}nger, and S.~Leipert, editors, {\em Graph
  Drawing}, volume 2265 of {\em LNCS}, pages 96--101. Springer, 2002.

\bibitem{degkl-ppal-12}
C.~A. Duncan, D.~Eppstein, M.~T. Goodrich, S.~G. Kobourov, and M.~L{\"o}ffler.
\newblock Planar and poly-arc {Lombardi} drawings.
\newblock In M.~Kreveld and B.~Speckmann, editors, {\em Graph Drawing}, volume
  7034 of {\em LNCS}, pages 308--319. Springer, 2012.

\bibitem{degkn-dtp-11}
C.~A. Duncan, D.~Eppstein, M.~T. Goodrich, S.~G. Kobourov, and
  M.~N{\"o}llenburg.
\newblock Drawing trees with perfect angular resolution and polynomial area.
\newblock In U.~Brandes and S.~Cornelsen, editors, {\em Graph Drawing}, volume
  6502 of {\em LNCS}, pages 183--194. Springer, 2011.

\bibitem{degkn-ldg-11}
C.~A. Duncan, D.~Eppstein, M.~T. Goodrich, S.~G. Kobourov, and
  M.~N{\"o}llenburg.
\newblock Lombardi drawings of graphs.
\newblock In U.~Brandes and S.~Cornelsen, editors, {\em Graph Drawing}, volume
  6502 of {\em LNCS}, pages 195--207. Springer, 2011.

\bibitem{elmn-o3ar-11}
D.~Eppstein, M.~L{\"o}ffler, E.~Mumford, and M.~N{\"o}llenburg.
\newblock Optimal 3d angular resolution for low-degree graphs.
\newblock In U.~Brandes and S.~Cornelsen, editors, {\em Graph Drawing}, volume
  6502 of {\em LNCS}, pages 208--219. Springer Berlin Heidelberg, 2011.

\bibitem{fejes_toth}
L.~Fejes~T\'oth.
\newblock \"{U}ber die {A}bsch\"atzung des k\"urzesten {A}bstandes zweier
  {P}unkte eneis auf einer {K}ugelfl\"ache liegenden {P}unktsystems.
\newblock {\em Jbf. Deutsch. Math. Verein}, 53:66--68, 1943.

\bibitem{DBLP:conf/focs/FormannHHKLSWW90}
M.~Formann, T.~Hagerup, J.~Haralambides, M.~Kaufmann, F.~T. Leighton,
  A.~Symvonis, E.~Welzl, and G.~J. Woeginger.
\newblock Drawing graphs in the plane with high resolution.
\newblock In {\em FOCS}, pages 86--95. IEEE Computer Society, 1990.

\bibitem{gt-pdar-94}
A.~Garg and R.~Tamassia.
\newblock Planar drawings and angular resolution: {Algorithms} and bounds.
\newblock In J.~van Leeuwen, editor, {\em European Symp. on Algorithms (ESA)},
  volume 855 of {\em LNCS}, pages 12--23. Springer, 1994.

\bibitem{gtv-dwc-96}
A.~Garg, R.~Tamassia, and P.~Vocca.
\newblock Drawing with colors.
\newblock In J.~Diaz and M.~Serna, editors, {\em European Symp. on Algorithms
  (ESA)}, volume 1136 of {\em LNCS}, pages 12--26. Springer, 1996.

\bibitem{Goodrich2000399}
M.~T. Goodrich and C.~G. Wagner.
\newblock A framework for drawing planar graphs with curves and polylines.
\newblock {\em Journal of Algorithms}, 37(2):399--421, 2000.

\bibitem{gm-ppdg-98}
C.~Gutwenger and P.~Mutzel.
\newblock Planar polyline drawings with good angular resolution.
\newblock In S.~H. Whitesides, editor, {\em Graph Drawing}, volume 1547 of {\em
  LNCS}, pages 167--182. Springer, 1998.

\bibitem{Nich68}
T.~Nicholson.
\newblock Permutation procedure for minimising the number of crossings in a
  network.
\newblock {\em Proc. of the Inst. of Electrical Engineers}, 115(1):21--26,
  1968.

\bibitem{Saaty64}
T.~L. Saaty.
\newblock The minimum number of intersections in complete graphs.
\newblock {\em Proceedings of the National Academy of Sciences of the United
  States of America}, 52(3):688--690, 1964.

\bibitem{tammes}
P.~Tammes.
\newblock On the origin of number and arrangements of the places of exit on the
  surface of pollen grains.
\newblock {\em Rec. Trav. Bot. Neerl.}, 27:1--81, 1930.

\bibitem{Vizing}
V.~G. Vizing.
\newblock On an estimate of the chromatic class of a $p$-graph.
\newblock {\em Diskret. Analiz.}, 3:25--30, 1964.

\bibitem{Wattenberg:2002}
M.~Wattenberg.
\newblock Arc diagrams: Visualizing structure in strings.
\newblock In {\em IEEE Symp. on Information Visualization (InfoVis)}, pages
  110--116, 2002.

\end{thebibliography}

\end{document}